\newtheoremstyle{indented}{5pt}{3pt}{\addtolength{\leftskip}{3.5em}}{}{\bfseries}{.}{.5em}{}
\newtheorem{theorem}{Theorem}
\newtheorem{definition}{Definition}
\newcommand\footnoteref[1]{\protected@xdef\@thefnmark{\ref{#1}}\@footnotemark}
\def\kB{k_{\rm B}}
\def\est{{\rm est}}
\newcommand{\fwd}{\mathrm{fwd}}
\newcommand{\rev}{\mathrm{rev}}
\newcommand{\caphead}[1]{{\bf #1}}
\begin{document}

\title{Number of trials required to estimate a free-energy difference, \protect\\  using fluctuation
relations}

 \author{Nicole~Yunger~Halpern\footnote{E-mail: nicoleyh@caltech.edu}}
 \affiliation{Institute for Quantum Information and Matter, Caltech, Pasadena, CA 91125, USA}
 \author{Christopher~Jarzynski}
 \affiliation{Department of Chemistry and Biochemistry, University of Maryland, College Park, MD 20742, USA}
 \affiliation{Institute for Physical Science and Technology, University of Maryland, College Park, MD 20742, USA}
\date{\today}

\pacs{
05.70.Ln,  
05.40.-a,  
05.70.Ce, 
89.70.Cf 
}

%
%
%
%
 \begin{abstract}
The difference $\Delta F$ between free energies
has applications in biology, chemistry, and pharmacology.
The value of $\Delta F$ can be estimated 
from experiments or simulations,
via fluctuation theorems developed in statistical mechanics.
Calculating the error in a $(\Delta F)$-estimate is difficult.
Worse, atypical trials dominate estimates.
How many trials one should perform
was estimated roughly in [Jarzynski, \emph{Phys. Rev. E} \textbf{73}, 046105 (2006)].
 We enhance the approximation with information-theoretic strategies:
 We quantify ``dominance'' with a tolerance parameter 
 chosen by the experimenter or simulator.
 We bound the number of trials one should expect to perform,
 using the order-$\infty$ R\'enyi entropy.
 The bound can be estimated if one implements the ``good practice'' 
 of bidirectionality, known to improve estimates of $\Delta F$.
 Estimating $\Delta F$ from this number of trials
 leads to an error that we bound approximately.
Numerical experiments on a weakly interacting dilute classical gas
 support our analytical calculations.

 \end{abstract}
 \maketitle

%
%
%
%
The numerical estimation of free-energy differences is an active area of research, 
having applications to chemistry, microbiology, pharmacology, and other fields.
Fluctuation relations can be used to estimate equilibrium free-energy differences $\Delta F$ from nonequilibrium experimental and simulation data. 
One repeatedly measures the amount $W$ of work extracted from, or performed on, a system during an experiment or simulation.
Fluctuation relations express the value of $\Delta F$ in terms of averages over infinitely many trials. 
Finitely many trials are performed in practice, 
introducing errors into estimates of $\Delta F$.
Efforts to quantify these errors, and to promote ``good practices'' in estimating $\Delta F$, 
have been initiated (e.g.,~\cite{GoodPractices,RohwerAT14,LuKofke_JChemPhys_01_I,LuKofke_JChemPhys_01_II,GoreRB03,WuKofke_JChemPhys_04,WuKofke_PRE_04,WuKofke_JChemPhys_2005_I,WuKofke_JChemPhys_2005_II,Kofke06,HahnT09,KimTalkner12}).

How many trials should one perform to estimate $\Delta F$ reliably? 
The work $W$ extracted from a system 
is a random variable that assumes different values in different trials.
Typical trials involve $W$-values that contribute little to the averages being estimated.
\emph{Dominant} $W$-values, which largely determine the averages, 
characterize few trials~\cite{RareEvents}.
Until observing a dominant $W$-value, one cannot estimate $\Delta F$ with reasonable accuracy.
The probability that some trial will involve a dominant $W$-value 
determines the number $N$ of trials one should expect to perform.

A rough estimate of $N$ was provided in~\cite{RareEvents}.
In this paper, we enhance the estimate's precision.
First, we introduce fluctuation relations and \emph{one-shot information theory},
a mathematical toolkit for quantifying efficiencies at small scales.
Next, we quantify dominance in terms of a tolerance parameter $w^\delta$.
We bound the number $N_\delta$ of trials expected to be required 
to observe a dominant work value.
This bound depends on the thermal order-$\infty$ R\'enyi entropy $H^\beta_\infty$, 
a quantity inspired by one-shot information theory~\cite{YungerHalpernGDV15}.
The bound can be estimated during an implementation of 
the ``bidirectionality good practice'' recommended in~\cite{GoodPractices}.
Finally, we approximately bound the error 
in a $(\Delta F)$-estimate inferred from $N_\delta$ trials.
A weakly interacting dilute classical gas~\cite{CrooksJarz} illustrates our analytical results.

%
%
%
%
\textbf{Technical introduction---}Let us introduce nonequilibrium fluctuation relations and 
the thermal order-$\infty$ R\'enyi entropy $H^\beta_\infty$.

\emph{Nonequilibrium fluctuation relations---}Nonequilibrium fluctuation relations govern statistical mechanical systems arbitrarily far from equilibrium.
Consider a system in thermal equilibrium 
with a heat bath at inverse temperature $\beta  \equiv  \frac{1}{ k_{\rm B} T}$,
wherein $k_{\rm B}$ denotes Boltzmann's constant.
We focus on classical systems for simplicity, though fluctuation relations have been extended to quantum systems~\cite{CampisiHT11}.
Suppose that a time-dependent external parameter $\lambda_t$
determines the system's Hamiltonian: $H = H( \lambda_t,  \mathbf{z} )$,
wherein $\mathbf{z}$ denotes a phase-space point.
If the system consists of an ideal gas in a box, 
$\lambda_t$ may denote the height of the piston that caps the gas.
Suppose that, at time $t = -\tau$, the system begins
with the equilibrium phase-space density
$e^{ - \beta H (\lambda_{-\tau},  \mathbf{z}  ) } / Z_{-\tau}$,
wherein the partition function $Z_{-\tau}$ normalizes the state.
The external parameter is then varied according to a predetermined schedule 
$\lambda_t$, from $t=-\tau$ to $t=\tau$.
The system evolves away from equilibrium if $\tau$ is finite.
In the gas example, the piston is lowered, compressing the gas.
We call this process the \emph{forward protocol}.

The \emph{reverse protocol} begins with the system at equilibrium 
relative to $H ( \lambda_\tau, \mathbf{z} )$. 
The external parameter is changed to $\lambda_{-\tau}$ 
along the time-reverse of the path followed during the forward protocol.
In the gas example, the piston is raised, and the gas expands.

Changing the external parameter
requires or outputs some amount of work.
We use the following sign convention:
The forward process tends to require an investment of 
a positive amount $W > 0$ of work,
and the reverse process tends to output $W > 0$.
The value of $W$ varies from trial to trial.
After performing many trials,
one can estimate the probability $P_\fwd(W)$ 
that any particular forward trial will cost an amount $W$ of work 
and the probability $P_\rev(-W)$
that any particular reverse trial will output an amount $W$.

These probabilities satisfy \emph{Crooks' Theorem}~\cite{Crooks99},
\begin{align}
\label{eq:CrooksThm}
   \frac{  P_\fwd(W) }{  P_\rev(-W)  }
   =  e^{ \beta ( W - \Delta F) }.
\end{align}
Here, $\Delta F  :=  F_\tau  -  F_{-\tau}$ denotes the difference between 
the free energy $F_\tau  =  - \beta^{-1} \log ( Z_{\tau} )$ of 
the Gibbs distribution 
$e^{ - \beta H (\lambda_{\tau},  \mathbf{z}  ) } / Z_{\tau}$
corresponding to the final Hamiltonian
and the free energy $F_{-\tau}  =  - \beta^{-1}  \log ( Z_{-\tau} )$ 
of the Gibbs distribution corresponding to $H( \lambda_{-\tau},  \mathbf{z} )$.
Multiplying each side of Crooks' Theorem by $P_\rev (-W)  e^{ \beta \Delta F}$,
then integrating over $W$,
yields a version of the \emph{nonequilibrium work relation}~\cite{Jarzynski97}:
\begin{align}
   e^{ \beta \Delta F }
   & =  \label{eq:JarzEqRev}
            \langle  e^{ \beta W } \rangle_\rev   \\
   & :=  \label{eq:Avg}
           \int_{ -\infty}^\infty  dW  \:  e^{ \beta W}  P_\rev (-W). 
\end{align}
The angle brackets denote an average over infinitely many trials. 
To calculate $\Delta F$, one performs many trials, estimates the average, 
and substitutes into Eq.~\eqref{eq:JarzEqRev}.

%
%
\emph{Thermal order-$\infty$ R\'enyi entropy ($H^\beta_\infty$)---}Entropies quantify uncertainties in statistical mechanics and in information theory.
Let $P  :=  \{ p_i \}$ denote a probability distribution
over a discrete random variable $X$.
The \emph{Shannon entropy}
$H_S(P)  :=  - \sum_i  p_i  \log (p_i)$
quantifies an average, over infinitely many trials, 
of the information one gains upon learning the value assumed by $X$ in one trial~\cite{CoverT12}.

$H_S$ has been generalized to a family of \emph{R\'enyi entropies} $H_\alpha$.
The parameter $\alpha  \in  [0,  \infty)$ is called the \emph{order}.
The $H_\alpha$'s quantify uncertainties related to finitely many trials.
In the limit as $\alpha \to \infty$, $H_\alpha$ approaches
\begin{align}
   H_\infty (P)  =  - \log ( p_{\rm max} ),
\end{align}
wherein $p_{\rm max}$ denotes the greatest $p_i$.
This maximal entropy has applications to randomness extraction:
The efficiency with which finitely many copies of $P$ 
can be converted into a uniformly random distribution 
$( \underbrace{  \frac{1}{d},   \ldots,   \frac{1}{d}  }_d  )$
is quantified with $H_\infty(P)$~\cite{RennerW04}.

The distributions $P_\fwd$ and $P_\rev$ in Crooks' Theorem are continuous.
Hence we need a continuous analog of $H_\infty$.
The definition
\begin{align}
   \label{eq:HInfDef}
   H^\beta_\infty (P)  
   :=  - \log ( p_{\rm max} / \beta )
\end{align}
has been shown to be useful in contexts that involve heat baths~\cite{YungerHalpernGDV15}.
$p_{\rm max}$ denotes the greatest value of the probability density $P$.
$p_{\rm max}$ can diverge, e.g., if $P$ represents a Dirac delta function.
But delta functions characterize the work distributions of quasistatic protocols,
whose work $W = \Delta F$ in every trial.
We focus on more-realistic, quick protocols.
$P_\fwd$ and $P_\rev$ are short and broad,
so $p_{\rm max}$ is finite.

The density $p_{\rm max}$ has dimensions of inverse energy,
which are canceled by the $\beta$ in Eq.~\eqref{eq:HInfDef}.
Hence the logarithm's argument is dimensionless.
For further discussion about $H^\beta_\infty$, see~\cite{YungerHalpernGDV15}.

%
%
%
%
\textbf{Quantification of dominance---}Let us return to 
the nonequilibrium work relation~\eqref{eq:Avg}. 
The exponential enlarges already-high $W$-values, which dominate the integral.
To estimate the integral accurately, 
one must perform trials that output large amounts of work.
Few trials do;
dominant $W$-values are \emph{atypical}~\cite{RareEvents}.
How many trials should one expect to need to perform, to achieve reasonable convergence of the exponential average in Eq.~\eqref{eq:Avg}?

An approximate answer was provided in~\cite{RareEvents}:
\begin{align}
\label{eq:NEstimate}
   N   \sim   e^{ \beta  ( \langle W \rangle_\fwd  -  \Delta F  ) },
\end{align}
wherein $\langle . \rangle_\fwd$ denotes an average with respect to $P_\fwd(W)$.
The \emph{average dissipated work} $\langle W \rangle_\fwd  -  \Delta F$
represents the mean amount of work wasted as heat.
Switching $\lambda_t$ quasistatically (infinitely slowly)
would cost an amount $\Delta F$ of work.
Switching at a finite speed costs more:
Work is dissipated into the bath as heat
when the system is driven away from equilibrium.
The dissipated work $W  -  \Delta F$
signifies the extra work paid to switch $\lambda_t$ 
in a finite amount of time.

How large must a $W$-value be to qualify as dominant?
This question remained open in~\cite{RareEvents}.
We propose a definition inspired by information-theoretic protocols 
in which an agent specifies an error tolerance.
The experimenter who switches $\lambda_t$, or the programmer who simulates trials,
chooses a threshold value of $w^\delta$
used to lower-bound the $W$-values considered large.
\begin{definition}
\label{definition:wdelta}
A work value $W$ extracted from a reverse-protocol trial 
is called $w^\delta$-\emph{dominant} if
$W \geq  w^\delta$
for the fixed value $w^\delta$ chosen by the agent. 
\end{definition}

A similar quantity is defined in~\cite{LuKofke_JChemPhys_01_I}.
Lu and Kofke assess the accuracy of free-energy-perturbation (FEP) calculations.
FEP is used to estimate free-energy differences $\Delta F$.
FEP results from a limit of nonequilibrium-fluctuation theory~\cite{Jarzynski97}.
In~\cite{LuKofke_JChemPhys_01_I}, a fixed-length simulation 
is assumed to be performed.
A difference $u$ between potential energies is measured.
$u$, in FEP, plays the role of $W$ in nonequilibrium fluctuation relations.
Lu and Kofke denote by $p(u)$ the probability
that a fixed-length simulation yields the potential-energy difference $u$.
\emph{Limit energies} $u_1$ and $u_2$ are defined as
the extreme realizable $u$-values.

Lu and Kofke fix the simulation length, 
then calculate the most likely limit energy, $u^*$.
In contrast, the agent in the present work fixes a tolerance $w^\delta$.
The number $N_\delta$ of required trials 
(similar to the simulation length) is then bounded.
Lu and Kofke also use the mode of $W^*$ 
to calculate the error in $\Delta F$.
The \emph{neglected-tail model} of~\cite{LuKofke_JChemPhys_01_I}
was extended from FEP to nonequilibrium fluctuation relations in~\cite{WuKofke_JChemPhys_04}.
When calculating the error in $\Delta F$,
Wu and Kofke average over possible values of the limit energy $W^*$.
The framework in~\cite{LuKofke_JChemPhys_01_I,WuKofke_JChemPhys_04} accommodates arbitrary $W^*$-values.
Yet statistical properties, such as the mean and mode,
are emphasized.
That emphasis is complemented by the present paper's 
information-theory-inspired choice of $w^\delta$ by the agent.
Additionally, the choice $w^\delta = \langle W \rangle_\fwd  -  \Delta F$
of the dissipated work is analyzed below.

Definition~\ref{definition:wdelta} enables us to bound the number $N_\delta$ 
of trials expected to be performed before 
one trial outputs a $w^\delta$-dominant amount of work.

%
%
%
%
\textbf{Bound on expected number $N_\delta$ of trials required---}Imagine 
implementing reverse trials until extracting 
a $w^\delta$-dominant amount of work from one trial.
One might have luck and extract $W \geq w^\delta$ on the first try.
But one would not expect to.
One would expect the number of trials to equal
the inverse $1 / \int_{w^\delta}^\infty  dW  \:  P_\rev ( W )$
of the probability that any particular reverse trial will output $W \geq w^\delta$.
In the notation of~\cite{YungerHalpernGDV15}, 
$\int_{w^\delta}^\infty  dW  \:    P_\rev ( W )  =  1 - \delta$ (see Fig.~\ref{fig:PRev}):
\begin{align}
\label{eq:NumProb}
   N_\delta  =  \frac {1 }{ 1 - \delta}.
\end{align}

Let us clarify what ``expect to perform $N_\delta$ trials'' means.
Imagine performing $M$ sets of reverse trials.
In each set, one performs trials until extracting $W \geq w^\delta$ from one trial.
Let $N_\delta^i$ denote the number of trials performed during the $i^{\rm th}$ set.
Consider averaging $N_\delta^i$ over the $M$ sets of trials:
$\frac{1}{M}  \sum_{i = 1}^M   N_\delta^i$.
As the number of sets grows large,
the average of the number of required trials in a set 
approaches the ``expected'' value $N_\delta$:
\begin{align}
   \lim_{ M \to \infty }
   \frac{1}{M}
   \sum_{i = 1}^M   N_\delta^i
   =   N_\delta.
\end{align}
This interpretation will facilitate our bounding of $N_\delta$.

%
%
\begin{figure}[tb]
\centering
\includegraphics[width=.42\textwidth, clip=true]{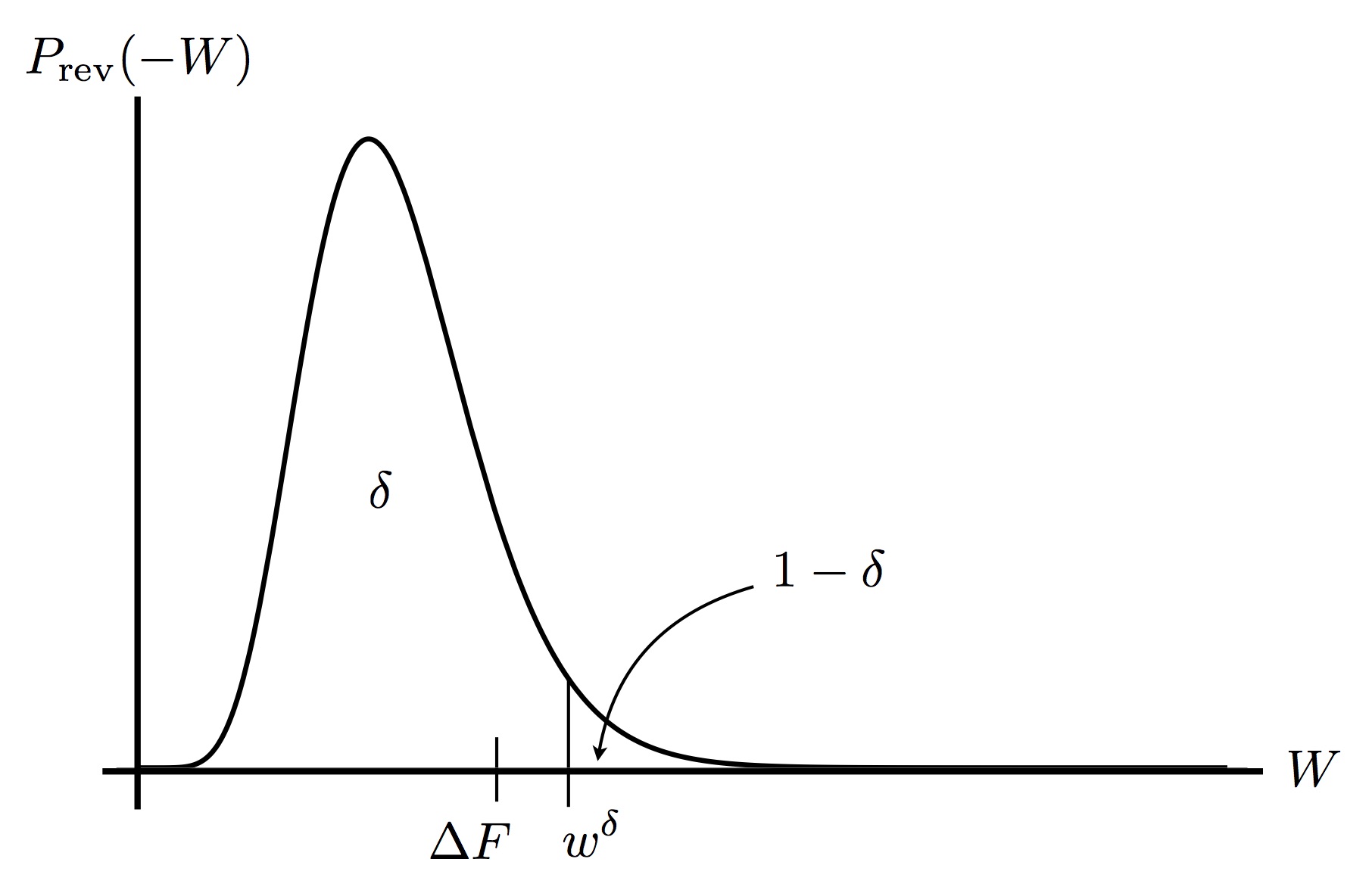}
\caption{\caphead{Dominant values of work extractable from reverse-protocol trials:}
Large values $W$ of work contribute the most to 
the integral in the nonequilibrium fluctuation relation~\eqref{eq:JarzEqRev}.
An amount $W$ of extracted work is called \emph{$w^\delta$-dominant} if 
it is at least as great as the threshold $w^\delta$ specified by the experimenter:
$W \geq w^\delta$.
The probability that any particular reverse trial 
will output a $w^\delta$-dominant amount of work is 
$\int_{ w^\delta}^\infty  dW  \:  P_\rev ( -W )   =  1  -  \delta$.
This probability equals the area of the region under the distribution's right-hand tail.}
\label{fig:PRev}
\end{figure}

%
%
\begin{theorem}[Bound on expected number of trials]
\label{theorem:NBound}
The number $N_\delta$ of reverse trials 
expected to be performed
before one trial outputs a $w^\delta$-dominant amount
$W \geq w^\delta$ of work 
is bounded as
\begin{align}   
   \label{eq:NBound}
   N_\delta    \geq
   e^{ \beta (w^\delta  -  \Delta F)  +  H^\beta_\infty ( P_\fwd ) }.
\end{align}
\end{theorem}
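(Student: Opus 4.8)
The plan is to exploit the probabilistic reading of $N_\delta$ supplied just before the theorem statement: since $N_\delta = 1/(1-\delta)$, lower-bounding $N_\delta$ is equivalent to \emph{upper}-bounding the probability $1-\delta = \int_{w^\delta}^\infty dW \, P_\rev(-W)$ that a single reverse trial outputs a $w^\delta$-dominant work value. The entire task therefore reduces to controlling this one integral from above.

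The key idea is to trade the reverse distribution for the forward one via Crooks' Theorem~\eqref{eq:CrooksThm}, which rearranges to $P_\rev(-W) = e^{-\beta(W-\Delta F)} P_\fwd(W)$. Substituting gives
\begin{align}
  1 - \delta = e^{\beta \Delta F} \int_{w^\delta}^\infty dW \, e^{-\beta W} \, P_\fwd(W).
\end{align}
Next I would bound the density pointwise by its peak, $P_\fwd(W) \leq \pmax$, pull this constant out of the integral, and evaluate the elementary remaining integral $\int_{w^\delta}^\infty dW \, e^{-\beta W} = e^{-\beta w^\delta}/\beta$. This yields $1-\delta \leq (\pmax/\beta)\, e^{-\beta(w^\delta - \Delta F)}$. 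The final move is to recognize, from the definition~\eqref{eq:HInfDef} of the thermal order-$\infty$ R\'enyi entropy, that $\pmax/\beta = e^{-H^\beta_\infty(P_\fwd)}$; substituting this identity and inverting both sides then delivers the claimed bound~\eqref{eq:NBound}.

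I do not expect a genuine analytic obstacle: every step after Crooks' Theorem is an elementary estimate. The real content lies in the modeling choice of \emph{which} bound to apply. Replacing $P_\fwd(W)$ by its global maximum $\pmax$ is precisely the step that manufactures the entropic term, and it is this recognition --- that bounding the forward density by its peak is exactly what $H^\beta_\infty$ was built to quantify --- that turns a routine calculation into a statement with operational meaning. The only point I anticipate requiring care is bookkeeping of the sign conventions and the argument of $P_\rev$ (work extracted in a reverse trial appears as $P_\rev(-W)$), so that the exponent in Crooks' Theorem enters with the correct sign and the $e^{\beta \Delta F}$ prefactor cancels as intended.
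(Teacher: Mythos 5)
Your proposal is correct and follows essentially the same route as the paper: the paper's proof invokes Ineq.~\eqref{eq:wdBound} from~\cite{YungerHalpernGDV15}, whose stated ingredients (the definition of $1-\delta$, Crooks' Theorem, the pointwise bound $P_\fwd(W)\leq\pmax$, and the definition of $H^\beta_\infty$) are exactly the steps you carry out explicitly before inverting $1-\delta$. The only difference is that you derive that intermediate inequality from scratch rather than citing it, which makes the argument self-contained.
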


\begin{proof}
The inequality
\begin{align}
\label{eq:wdBound}
   w^\delta   \leq   \Delta F
   -  \frac{1}{\beta}  [
   H^\beta_\infty ( P_\fwd )   +   \log ( 1  -  \delta ) ]
\end{align}
was derived in~\cite{YungerHalpernGDV15}.
The derivation relies on the definitions of $1 - \delta$ and $H^\beta_\infty$, 
on Crooks' Theorem, and on the bound $P_\fwd(W)  \leq  p_{\rm max}  \;  \forall \, W$.
Solving for $1 - \delta$, then inverting the probability [Eq.~\eqref{eq:NumProb}], yields Ineq.~\eqref{eq:NBound}.

\end{proof}

%
%
Inequality~\eqref{eq:NBound} implies that the bound on $N_\delta$ increases with $w^\delta$, which makes sense.
As we raise the threshold $w^\delta$, fewer work values qualify as $w^\delta$-dominant.
Hence more trials are expected to be required 
before a $w^\delta$-dominant work value is observed.

%
%
\emph{Improvement over Relation~\eqref{eq:NEstimate}---}Inequality~\eqref{eq:NBound} resembles its inspiration, 
Relation~\eqref{eq:NEstimate}, which states that the number $N$ of trials required
to achieve convergence of the average in Eq.~(\ref{eq:Avg}) increases exponentially with the
average dissipated work $\langle W \rangle_{\rm fwd}  -  \Delta F$.
Similarly, the bound on $N_\delta$ increases exponentially with 
the ``one-shot dissipated work'' $w^\delta  -  \Delta F$.
This $w^\delta  -  \Delta F$ represents
the work sacrificed for time
in a forward trial that costs an amount $w^\delta$ of work.

Moreover, $N_\delta$ is defined in terms of the reverse process.
Yet the bound on $N_\delta$ given by
Ineq.~\eqref{eq:NBound} depends on the forward work distribution, via $H^\beta_\infty ( P_\fwd )$.
Similarly, in Relation~\eqref{eq:NEstimate}, the number $N$ of repetitions of the reverse process
required for the convergence of Eq.~(\ref{eq:Avg}) depends on the forward work distribution 
$P_\fwd(W)$, via $\langle W \rangle_{\rm fwd}$.

Despite its similarity to Relation~\eqref{eq:NEstimate}, 
Ineq.~\eqref{eq:NBound} offers three advantages. 
First, Ineq.~\eqref{eq:NBound} quantifies dominance with $\delta$, reflecting the agent's accuracy tolerance.
Next,  Relation~\eqref{eq:NEstimate} is a rough estimate.
Inequality~\eqref{eq:NBound} is a strict bound
on the number of trials expected to be performed 
before a $w^\delta$-dominant amount of work is extracted.
Finally, Ineq.~\eqref{eq:NBound} contains an entropy 
that has no analog in Relation~\eqref{eq:NEstimate}. 
The entropy tightens the bound when 
\begin{align}
\label{eq:Tighten}
   p_{\rm max}  <  \beta.
\end{align}
This inequality is satisfied, for instance, in RNA-hairpin experiments used to test fluctuation theorems~\cite{CollinRJSTB05}.

To appreciate these advantages over Relation~\eqref{eq:NEstimate},
we can define $w^\delta$-dominant work values 
by choosing $w^\delta  =  \langle W  \rangle_\fwd$,
as in~\cite{RareEvents}.
The bound becomes
\begin{align}
   N_\delta   \geq   
   e^{ \beta ( \langle W  \rangle_\fwd  -  \Delta F )  +  H^\beta_\infty (P_\fwd) }.
\end{align}
When $p_{\rm max}  <  \beta$ (such that $H^\beta_\infty>1$), 
the number of trials required for Eq.~\eqref{eq:Avg} to converge 
exceeds the prediction in Relation~\eqref{eq:NEstimate}.


We can gain further insight by rewriting Ineq.~\eqref{eq:NBound} as
\begin{align}
   N_\delta   \geq
   \frac{\beta}{ p_{\rm max} }
   e^{ \beta ( \langle W  \rangle_\fwd  -  \Delta F ) },
   \label{eq:Nbound_rewritten}
\end{align}
using the definition of $H^\beta_\infty$ [Eq.~\eqref{eq:HInfDef}].
The fraction ${\beta} / p_{\rm max}$ represents approximately
the number of forward trials performed 
before one trial's $W$-value
falls within a width-$( \kB T )$ window
about the most probable work value $W_{\rm max}$:
$W  \in  [ W_{\rm max}  -  \frac{ \kB T }{2},  W_{\rm max}  +  \frac{ \kB T }{2} ]$.
The value of ${\beta} / p_{\rm max}$ generically increases with the width of the distribution $P_\fwd(W)$.
Hence the bound on $N_\delta$, as written in Ineq.~(\ref{eq:Nbound_rewritten}), is a product of two factors. 
The first depends on the forward work distribution's width;
and the second, on its mean.
In contrast, Relation~\eqref{eq:NEstimate} depends only on the mean.

The area under distributions' tails is evoked also in~\cite{WuKofke_JChemPhys_04}.
Wu and Kofke use their \emph{neglected tail model}
to estimate the bias in $\Delta F$.

%
%
\emph{Classical vs. quantum applications---}Classical mechanics describes
most experiments and numerical simulations for which $N_\delta$ needs calculating.
Nonetheless, quantum experiments merit consideration.

We have assumed that the work distributions 
$P_\fwd(W)$ and $P_\rev(-W)$ are continuous.
Classical systems have continuous work distributions:
A classical system's possible energies form a continuous set.
So do the differences between possible energy values---the 
possible work values.
Continuousness leads to Ineq.~\eqref{eq:wdBound},
from which Theorem~\ref{theorem:NBound} is derived.
How to extend Ineq.~\eqref{eq:wdBound} to discrete sets
of possible work values is unclear.

Quasiclassical systems can have continuous work distributions.
By \emph{quasiclassical}, we mean
systems whose energies form a discrete set
but whose states (density operators) 
commute with the Hamiltonian.
Consider a quasiclassical system 
that exchanges heat with a bath throughout the work extraction.
The system always occupies an energy eigenstate
if the energy is measured frequently~\cite{QuanD08,YungerHalpernGDV15}.
The work performance lowers the system's energy levels.
Suppose that two levels fall at different rates.
The system can hop from level to level at any time.
Hopping at time $t$ can output infinitesimally more work
than hopping at time $t + dt$~\cite{YungerHalpernGDV15}.
Such quasiclassical systems obey Theorem~\ref{theorem:NBound}.

Discrete work distributions characterize quantum systems
that undergo the \emph{two-time-measurement protocol}~\cite{Tasaki00,Kurchan00}.
A quantum system undergoes an energy measurement,
is isolated from the bath,
performs work unitarily,
and suffers another energy measurement.
The differences between the possible measurement outcomes
form a discrete set.
Extending Theorem~\ref{theorem:NBound} to such protocols
could merit investigation.
One might incorporate the bin width of the histograms
used to approximate $P_\fwd(W)$ and $P_\rev(-W)$.
On the other hand, bin widths are artificial approximation tools,
chosen by the experimenter.
One might prefer a theory independent of such an approximation~\cite{YungerHalpernGDV15}.
Extensions may be galvanized 
by the evolution of quantum experiments
to a point that requires $N_\delta$ estimations.

%
%
\emph{Fail safety---}Fail safety is a property of 
certain estimates calculated from incomplete data.
The bound on $N_\delta$ depends on the free-energy difference $\Delta F$.
$\Delta F$ is estimated from forward-trial data.
Finitely many forward trials are performed.
Hence the $\Delta F$ estimate is biased.
This bias skews one's estimate of the $N_\delta$ bound.
Suppose that the estimate lay above the true value of $N_\delta$.
The $N_\delta$-bound estimate would lead the agent
to perform enough trials
to estimate $\Delta F$ with reasonable accuracy.
The $N_\delta$-bound estimate would be \emph{fail-safe}~\cite{WuKofke_JChemPhys_2005_I,WuKofke_JChemPhys_2005_II}.
Fail safety is often desirable.
Surprisingly, a lack of fail safety benefits Theorem~\ref{theorem:NBound},
because Ineq.~\eqref{eq:NBound} lower-bounds $N_\delta$.

The bias in the $\Delta F$ estimate
lowers estimates of the $N_\delta$ bound
below the bound's true value:
The nonequilibrium fluctuation relation can be expressed as
$e^{ - \beta \Delta F }  =  \langle e^{ - \beta W } \rangle_\fwd$~\cite{Jarzynski97}.
Solving for $\Delta F$ yields
\begin{align}
   \label{eq:DeltaFJarz}
   \Delta F  =  - \frac{1}{ \beta }
   \log \langle e^{ - \beta W } \rangle_\fwd.
\end{align}
Forward trials tend to cost large amounts of work:
Typical $W$-values are high.
High $W$-values lower the estimate of
$\langle e^{ - \beta W } \rangle_\fwd$
below the average's true value.
This low estimate raises the $\Delta F$ estimate above the true $\Delta F$ value,
by Eq.~\eqref{eq:DeltaFJarz}.
This overestimate of $\Delta F$ 
lowers the estimate of the $N_\delta$ bound
below the bound's true value, by Ineq.~\eqref{eq:NBound}.

In summary, Ineq.~\eqref{eq:NBound} lower-bounds $N_\delta$.
Estimating this lower-bound with biased data
generates an even lower bound on $N_\delta$:
\begin{align}
   N_\delta \geq {\rm (True  \;  lower   \;  bound) }
   \geq   \rm{ (Estimated   \;   lower   \;   bound). }
\end{align}
This second lower-bounding 
renders Theorem~\ref{theorem:NBound} robust
against the bias in the $\Delta F$ estimate.

This robustness precludes fail-safety.
Suppose that the protocol were fail-safe.
The estimate of the $N_\delta$ bound 
would lie above the true bound:
\begin{align}
   N_\delta  \geq  {\rm (True  \;  lower   \;  bound) }
   \leq    \rm{ (Estimated   \;   lower   \;   bound). }
\end{align}
One's estimate of the lower bound on $N_\delta$
\emph{would not necessarily lower-bound $N_\delta$.}
An experimentalist could not use Theorem~\ref{theorem:NBound}.
The theorem benefits, unusually, from a lack of fail-safety.

%
%
%
%
\textbf{Evaluating the $N_\delta$ bound---}Not only does Ineq.~\eqref{eq:NBound} 
have a theoretically satisfying form,
but it can also be estimated in practice.
We will discuss how to estimate 
the $H^\beta_\infty ( P_\fwd )$ and the $\Delta F$ in the bound.
The bound can be estimated reasonably, we argue,
from not too many trials.

The experimental set-up determines $\beta$, and the agent chooses $w^\delta$.
$H^\beta_\infty ( P_\fwd )$ and $\Delta F$ can be estimated
if one implements the ``good practice'' of bidirectionality.
To mitigate errors in $(\Delta F)$-estimates,
one should perform forward trials, perform reverse trials,
and combine all the data~\cite{GoodPractices}.
Upon performing several forward trials, one can estimate 
$H^\beta_\infty ( P_\fwd )$ and $\Delta F$.
One can estimate the $N_\delta$ bound, then
perform (probably at least $N_\delta$) reverse trials
until observing a $w^\delta$-dominant work value, 
and improve the $(\Delta F)$-estimate.\footnote{
$N_\delta$ can be estimated from reverse trials alone, less reliably.
One could perform a few reverse trials, estimate $P_\rev(-W)$, and estimate $\Delta F$.
From these estimates
and from Crooks' Theorem, one could estimate $P_\fwd(W)$.
From $P_\fwd(W)$, one could estimate $H_\infty^\beta ( P_\fwd )$, 
then estimate the $N_\delta$ bound. 
One could repeat this process, improving one's estimate of the bound, 
until observing a $w^\delta$-dominant work value.
But the estimate of $\Delta F$ is expected to jump repeatedly~\cite{RareEvents}.
This sawtooth behavior, 
as well as the piling of estimate upon estimate, 
may taint the estimates of the bound.}

$H^\beta_\infty$ depends on $p_{\rm max}$,
the greatest probability (per unit energy) of any possible forward-trial outcome.
This outcome will likely appear in many trials.
Hence one expects to estimate $H^\beta_\infty$ well
from finitely many forward trials. 

%
%
\begin{figure}[tb]
\centering
\includegraphics[width=.42\textwidth, clip=true]{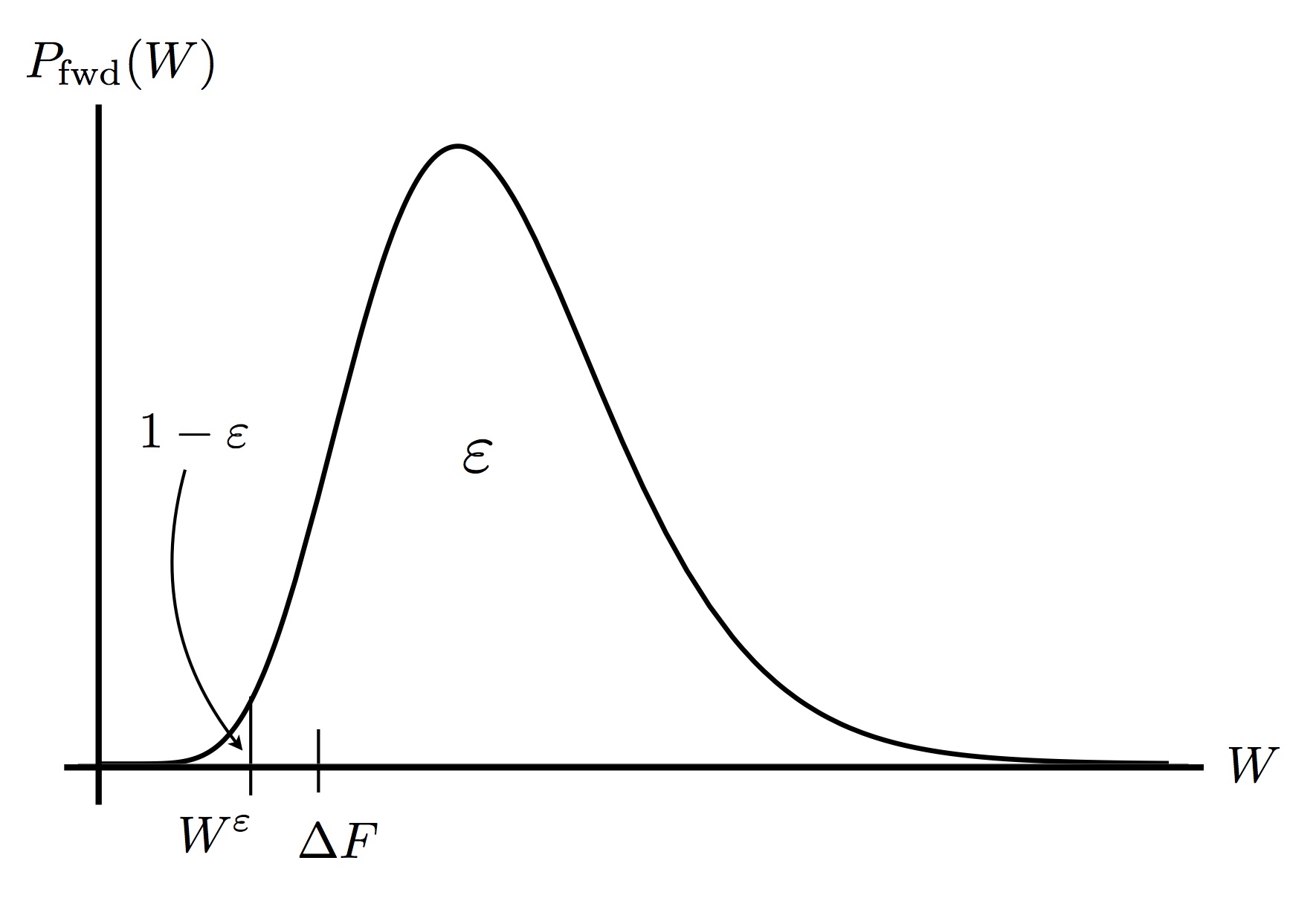}
\caption{\caphead{Dominant values of work invested in forward-protocol trials:}
Small values $W$ of work dominate the nonequilibrium work relation~\eqref{eq:JarzEq}.
An amount $W$ of invested work is called \emph{$W^\varepsilon$-dominant} if 
it lies below or on the threshold $W^\varepsilon$ chosen by the experimenter:
$W \leq W^\varepsilon$.
The probability that any particular forward trial will require 
a $W^\varepsilon$-dominant amount of work is 
$1  -  \varepsilon  =  \int_{-\infty}^{ W^\varepsilon }  dW  \:  P_\fwd (W)$.
This probability equals the area under the distribution's left-hand tail.}
\label{fig:PFwd3}
\end{figure}
%
%
%

%
%
%
%
\textbf{Forward-protocol bound---}Trials or computations performed in one direction
can cost more time than 
trials or computations performed in the opposite direction~\cite{LuKofke_JChemPhys_01_I}.
We have bounded a number $N_\delta$ of reverse trials.
Similarly, we should bound the number $N_\varepsilon$ of forward trials 
expected to be performed
before a $W^\varepsilon$-dominant amount of work is invested.
The analysis is analogous to that of $N_\delta$.

The nonequilibrium work relation for the forward process is
\begin{align}
\label{eq:JarzEq}
   \langle e^{ - \beta W } \rangle_\fwd  =  e^{ - \beta \Delta F }.
\end{align}
The forward trials that dominate the average in Eq.~\eqref{eq:JarzEq}
cost unusually small amounts of work.
In the notation of~\cite{YungerHalpernGDV15},
$W^\varepsilon$-dominant work values satisfy $W  \leq  W^\varepsilon$,
for a tolerance $W^\varepsilon$ chosen by the agent.
Each forward trial has a probability $1 - \varepsilon$ of costing
a $W^\varepsilon$-dominant amount of work (see Fig.~\ref{fig:PFwd3}).
Theorem 4 of~\cite{YungerHalpernGDV15} 
bounds $W^\varepsilon$ in terms of $1 - \varepsilon$.
Solving for $1 - \varepsilon$, then inverting,
bounds the number $N_\varepsilon  =  1 / (1 - \varepsilon)$ 
of forward trials expected to be performed
before any trial costs a $W^\varepsilon$-dominant amount of work:
\begin{align}
   N_\varepsilon   \geq
   e^{ - \beta ( W^\varepsilon  -  \Delta F)  +  H^\beta_\infty ( P_\rev ) }.
\end{align}

%
%
%
%
\textbf{Error estimate:} 
Calculating the error in a $(\Delta F)$-estimate
is crucial but difficult.
Whenever one infers a value from data,
the inference's reliability must be reported.
Common error analyses do not suit estimates of $(\Delta F)$-values,
for two reasons.
First, $\Delta F$ depends on the random variable $W$ logarithmically
[see Eq.~\eqref{eq:JarzEqRev}].
Second, $W$ tends not to be Gaussian.
Approaches such as
an uncontrolled approximation, in the form of a truncation of a series expansion,
have been proposed~\cite{GoodPractices}.
Our approach centers on the agent's choice of $w^\delta$.

Consider choosing a $w^\delta$-value and performing $N_\delta$ trials.
With what accuracy can one estimate $\Delta F$?
We will bound the percent error
\begin{align}
   \label{eq:PercErrDef}
   \epsilon  :=  \left\lvert   \frac{ \Delta F  -  (\Delta F)_\est  }{   \Delta F  }   \right\rvert
\end{align}
roughly.
To render the problem tractable,
we assume that one knows
the exact form of $P_\rev (-W)$ for all $W \leq w^\delta$.

This assumption features also in the neglected-tail model of~\cite{LuKofke_JChemPhys_01_I,LuKofke_JChemPhys_01_II,WuKofke_JChemPhys_04}.
The percent error in $e^{ - \beta \Delta F }$ is calculated,
with free-energy perturbation theory (FEP), in~\cite{LuKofke_JChemPhys_01_I}.
This percent error, if small, approximates the absolute error
$\Delta F  -  (\Delta F)_\est$ in the free-energy difference~\cite{LuKofke_JChemPhys_01_II}.
Bias calculations are extended from FEP
to nonequilibrium work fluctuation relations in~\cite{WuKofke_JChemPhys_04}.

%
%
\begin{theorem}[Approximate error bound]
   Let the work tolerance be $w^\delta \in (- \infty, \infty )$.
   Let $(\Delta F)_\est$ denote the estimate of the free-energy difference $\Delta F$
   inferred from data taken during $N_\delta$ trials.
   If $(\Delta F)_\est$ is calculated from the exact form of 
   $P_\rev (-W)   \quad \forall  \:  W \leq w^\delta$,
   the estimate has a percent error of
   \begin{align}
      \label{eq:ErrorBound}
       \epsilon   \geq   \frac{1}{ \beta (\Delta F) }
      \Big[ \eta  +  O (\eta^2) \Big], 
   \end{align}
   wherein
   \begin{align}
      \eta  :=  \frac{ e^{ \beta w^\delta } }{  N_\delta  \langle e^{\beta W} \rangle_\rev }.
   \end{align}
\end{theorem}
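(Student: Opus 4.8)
The plan is to treat $(\Delta F)_\est$ as a \emph{neglected-tail} estimator. Since $P_\rev(-W)$ is assumed known exactly only for $W \leq w^\delta$, the natural estimate discards the unsampled dominant tail and replaces the reverse work relation~\eqref{eq:Avg} with its truncation,
\begin{align}
   e^{\beta (\Delta F)_\est} = \int_{-\infty}^{w^\delta} dW \: e^{\beta W} \, P_\rev(-W).
\end{align}
First I would compare this with the exact relation $e^{\beta \Delta F} = \langle e^{\beta W} \rangle_\rev$, whose integrand is identical but whose range extends to $+\infty$. The two differ only by the neglected tail
\begin{align}
   B := \int_{w^\delta}^\infty dW \: e^{\beta W} \, P_\rev(-W),
\end{align}
so that $e^{\beta (\Delta F)_\est} = \langle e^{\beta W} \rangle_\rev - B$.

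Next I would isolate the error algebraically. Dividing by $\langle e^{\beta W} \rangle_\rev$ and taking a logarithm gives
\begin{align}
   \Delta F - (\Delta F)_\est = -\frac{1}{\beta} \log\!\left( 1 - \frac{B}{\langle e^{\beta W} \rangle_\rev} \right),
\end{align}
which, inserted into Eq.~\eqref{eq:PercErrDef}, expresses the percent error $\epsilon$ entirely through the ratio $x := B / \langle e^{\beta W} \rangle_\rev \in (0,1)$. The crux is then to lower-bound the neglected tail: on the domain $W \geq w^\delta$ the weight obeys $e^{\beta W} \geq e^{\beta w^\delta}$, so
\begin{align}
   B \geq e^{\beta w^\delta} \int_{w^\delta}^\infty dW \: P_\rev(-W) = \frac{e^{\beta w^\delta}}{N_\delta},
\end{align}
where the last step uses $1 - \delta = \int_{w^\delta}^\infty dW \: P_\rev(-W)$ and $N_\delta = 1/(1-\delta)$ [Eq.~\eqref{eq:NumProb}]. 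Hence $x \geq \eta$.

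To finish, I would use that $-\log(1-x)$ is increasing on $(0,1)$, so monotonicity upgrades $x \geq \eta$ into $\epsilon \geq \tfrac{1}{\beta \Delta F}\,[-\log(1-\eta)]$, and a Taylor expansion $-\log(1-\eta) = \eta + O(\eta^2)$ delivers Ineq.~\eqref{eq:ErrorBound}.

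The main obstacle I anticipate is the tail bound in the fourth display: it is what converts the sampling-cost quantity $N_\delta$ into a statement about accuracy, and it is responsible for the \emph{direction} of the inequality, yielding a lower bound on $\epsilon$ consistent with the stated $\geq$. I would also need to track the sign conventions carefully. Because neglecting the positive tail $B$ forces $(\Delta F)_\est$ to underestimate $\Delta F$, the argument of the absolute value in Eq.~\eqref{eq:PercErrDef} is positive; assuming $\Delta F > 0$ (the regime in which forward protocols require work) then lets me drop the modulus cleanly and retain the leading $O(\eta)$ term as a genuine lower bound rather than a mere approximation.
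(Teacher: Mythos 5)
Your proposal is correct and follows essentially the same route as the paper's proof: the same truncated-integral decomposition, the same tail bound $B \geq e^{\beta w^\delta}/N_\delta$ giving $x \geq \eta$, and the same Taylor expansion of $\log(1-x)$. Your use of the monotonicity of $-\log(1-x)$ before expanding, and your explicit handling of the sign of $\Delta F - (\Delta F)_\est$ and the assumption $\Delta F > 0$ to drop the modulus, are minor tightenings of the paper's argument rather than a different approach.
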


%
%
\begin{proof}
Let us solve the nonequilibrium work relation~\eqref{eq:JarzEqRev} for $\Delta F$:
\begin{align}
   \label{eq:Error1}
   \Delta F  
   & =  \frac{1}{\beta}  \log  \Big(  
   \langle  e^{\beta W}  \rangle_\rev   \Big)  \\
   & =  \frac{1}{\beta}  \log  \left( 
          \int_{ -\infty }^\infty  dW  \:  e^{ \beta W }  P_\rev (-W)  \right).
\end{align}
The estimate has a similar form:
\begin{align}
   ( \Delta F)_\est
   & =  \frac{1}{\beta}  \log  \left( 
          \int_{ -\infty}^{ w^\delta }   dW  \:  e^{ \beta W }  P_\rev (-W)  \right)  \\
   & =  \frac{1}{\beta}  \log  \Bigg( 
          \int_{ -\infty}^\infty   dW  \:  e^{ \beta W }  P_\rev (-W)
          \nonumber  \\  &  \qquad
          -   \int_{ w^\delta }^\infty  dW  \:  e^{ \beta W }  P_\rev (-W)
            \Bigg).
\end{align}
We replace the first integral with $\langle  e^{\beta W}  \rangle_\rev$,
using Eq.~\eqref{eq:Error1}.
The second term, representing the error,
is expected to be much smaller than the first term.
This second term will serve as a small parameter in a Taylor expansion:
\begin{align}
   ( \Delta F)_\est
   & =  \frac{1}{\beta} \Bigg[ 
          \log  \Big(   \langle  e^{\beta W}  \rangle_\rev   \Big)
          \nonumber  \\ &  
          +  \log   \Bigg(  1  - 
                        \frac{   \int_{ w^\delta }^\infty  dW  \:  e^{ \beta W }  P_\rev (-W)   }{
                        \langle  e^{\beta W}  \rangle_\rev }   \Bigg)
          \Bigg]   \\
   & =  \label{eq:Error2}
           \Delta F   
           -    \frac{1}{\beta}   \Big\{  \eta'  +  O (  \, [\eta']^2  \, )   \Big\},
\end{align}
wherein
\begin{align}
   \label{eq:EtaPrime}
   \eta'  :=  \frac{   \int_{ w^\delta }^\infty  dW  \:  e^{ \beta W }  P_\rev (-W)   }{
                        \langle  e^{\beta W}  \rangle_\rev }.
\end{align}

We can bound the numerator, using Fig.~\ref{fig:PRev}:
\begin{align}
   \int_{ w^\delta }^\infty  &  dW  \:   e^{ \beta W }  P_\rev (-W)
   \\ & \geq   e^{ \beta w^\delta }
                \int_{ w^\delta }^\infty  dW  \:  P_\rev (-W)   \\
   & =  e^{ \beta w^\delta }  (1 - \delta)  
   =  \frac{  e^{ \beta w^\delta }  }{  N_\delta  }.
\end{align}
Substituting into Eq.~\eqref{eq:EtaPrime} yields
$\eta'  \geq  \eta$.
Hence Eq.~\eqref{eq:Error2} reduces to
\begin{align}
   ( \Delta F)_\est
   \leq   \Delta F   
   -    \frac{1}{\beta}   \Big[  \eta  +  O ( \, \eta^2 \, )   \Big].
\end{align}
Substituting into the percent error's definition [Eq.~\eqref{eq:PercErrDef}]
yields Ineq.~\eqref{eq:ErrorBound}. 
\end{proof}

The approximate error bound can be estimated 
from agent-chosen parameters and from data:
The experiment's set-up determines the value of $\beta$.
The agent chooses the value of $w^\delta$. 
For $N_\delta$, one can substitute the number of trials performed
[or can substitute from Ineq.~\eqref{eq:NBound}].
$\Delta F$ and $\langle e^{ \beta W } \rangle_\rev$
can be estimated from data.

%
%
%
%
\textbf{Numerical experiments---}To illustrate our analytical results,
we considered the weakly interacting dilute classical gas.
This system's forward and reverse work distributions can be calculated exactly~\cite{CrooksJarz}. 
The gas begins in equilibrium with a heat bath 
at inverse temperature $\beta  \equiv  \frac{1}{ \kB T}$.
During the forward protocol, the gas is isolated from the bath at $t = -\tau$.
The gas is quasistatically compressed,
its temperature rising from $T$.
During the reverse protocol, the gas expands and cools.
When discussing either direction,
we denote the initial volume by $V_0$ and the final volume by $V_1$.

%
%
\begin{figure}[tb]
\centering
\includegraphics[width=.5\textwidth, clip=true]{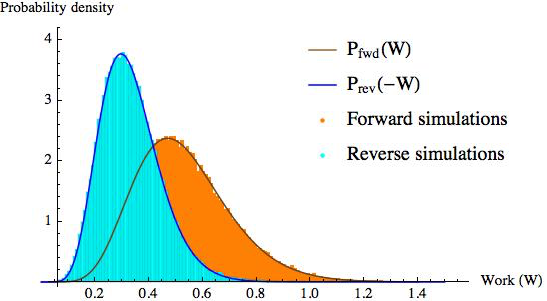}
\caption{
\caphead{Probability densities and numerical data 
for a weakly interacting dilute gas:}
(Color online.)
We considered a gas undergoing compression (a forward protocol) 
and expansion (a reverse protocol).
The probability per unit energy
that any particular trial will involve an amount $W$ of work [Eq.~\eqref{eq:CrooksJarz}]
was calculated in~\cite{CrooksJarz}.
The short, right-hand, brown curve represents $P_\fwd(W)$.
The tall, left-hand, dark-blue curve represents $P_\rev(-W)$.
By sampling work values from these distributions, we effectively
simulated each protocol $10^5$ times.
The cyan bars (under the left-hand curve) depict the data gathered from the forward-protocol samples.
The orange bars (under the right-hand curve) depict the data from the reverse-protocol samples.
}
\label{fig:PPlots}
\end{figure}

The probability densities over the possible work values
were calculated in~\cite{CrooksJarz}:
\begin{align}
\label{eq:CrooksJarz}
   P(W)  =  \frac{ \beta }{ | \alpha |  \Gamma(k) }
   \left(  \frac{ \beta W }{ \alpha }   \right)^{k - 1} 
   e^{ - \beta W / \alpha } \:
   \theta( \alpha W ).
\end{align}
During the forward protocol, 
$\alpha  :=  ( V_0 / V_1 )^{2 / 3}  -  1  >  0$;
during the reverse, $\alpha < 0$.
The gamma function is denoted by $\Gamma(k)$;
and its argument, by $k  :=  \frac{3}{2} n$,
wherein $n$ denotes the number of particles.
The theta function $\theta ( \alpha W )$
ensures that $W \geq 0$ is invested in forward trials
(for which $P  =  P_\fwd$);
and $W \leq 0$, in reverse trials (for which $P = P_\rev$).

This model illustrates accuracies also in~\cite{Kofke06}.
Kofke synthesizes theoretical results about $\Delta F$ estimates.
Relevant results include the neglected-tail model~\cite{WuKofke_JChemPhys_04}.
Numerical experiments on the gas illustrate those results.

%
%
\begin{figure}[tb]
\centering
\includegraphics[width=.45\textwidth, clip=true]{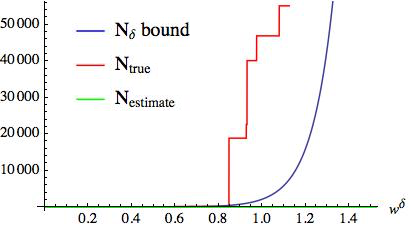}
\caption{\caphead{Three number-of-required-trial measures:}
(Color online.)
The abscissa shows possible choices of 
the threshold $w^\delta$ for $w^\delta$-dominant work values.
The blue (gently sloping) curve, calculated from $10^5$ forward-trial samples,
represents the bound on the number $N_\delta$ of reverse trials
expected to be performed before 
any trial outputs a $w^\delta$-dominant amount 
$W \geq w^\delta$ of work (Theorem~\ref{theorem:NBound}).
The red (staggered) curve, calculated from $10^5$ reverse-trial samples,
depicts the actual number $N_{\rm true}$ of trials performed before
$W \geq w^\delta$ is extracted.
The green curve (flat, nearly coincident with the abscissa)
was calculated from forward-trial samples.
This green curve represents Relation~\eqref{eq:NEstimate}:
an estimate $N_{\rm est}$ of the number of trials required 
to extract a dominant amount of work,
wherein the meaning of ``dominant'' is unspecified.
The blue (gently sloping) curve follows the red (staggered) curve's shape 
more faithfully than the green (flat) does,
illustrating the precision of Theorem~\ref{theorem:NBound}.
As expected, the blue (gently sloping) curve 
lower-bounds the red (staggered) at most $w^\delta$-values.
}
\label{fig:Simulation}
\end{figure}

We sampled $10^5$ values of $W$ from the forward (compression) work distribution 
and $10^5$ values from the reverse (expansion) work distribution. 
Figure~\ref{fig:PPlots} shows the probability densities and the sampled data.
We chose $V_0 / V_1  =  2$ and $n = 6$, following~\cite{CrooksJarz}, and $\beta = 10$.
Dividing a histogram of the forward-protocol data into 50 bins yielded $p_{\rm max}  =  1.577$. 
Satisfying Ineq.~\eqref{eq:Tighten}, this $p_{\rm max}$
enables $H^\beta_\infty (P_\fwd)$ to tighten the $N_\delta$ bound.

Figure~\ref{fig:Simulation} illustrates our results.
Possible values of $w^\delta$ appear along the abscissa.
The blue (gently sloping) curve shows the $N_\delta$ bound,
calculated from forward-trial samples, in Theorem~\ref{theorem:NBound}.
The red (staggered) curve, calculated from reverse-trial samples,
shows after how many reverse trials ($N_{\rm true}$)
$W \geq w^\delta$ was extracted during one trial.
$N_{\rm true}$ has a jagged, step-like shape, as one might expect.

The green curve (flat) lies close to the abscissa.
This curve depicts the estimate, in~\cite{RareEvents}, 
of the number of reverse trials expected to be performed
before one trial outputs a dominant work value,
for an unspecified meaning of ``dominant.''
We calculated $N_{\rm est}  =  3$
by simulating forward trials,
calculating the average dissipated work, 
and substituting into Relation~\eqref{eq:NEstimate}.

%
%
\begin{figure}[tb]
\centering
\includegraphics[width=.45\textwidth, clip=true]{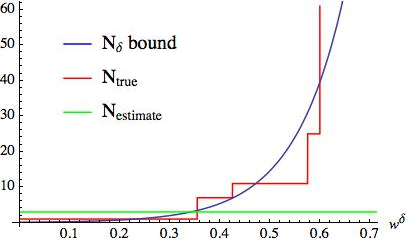}
\caption{
\caphead{Three number-of-required-trial measures
at low threshold work values $w^\delta$:}
(Color online.)
At most threshold values $w^\delta$, 
the $N_\delta$ bound (blue, gently sloping) lower-bounds 
the actual number $N_{\rm true}$ (red, staggered)
of reverse trials performed before 
any trial outputs a $w^\delta$-dominant amount $W \geq w^\delta$ of work.
At low $w^\delta$-values, the red curve zigzags across the blue (gently sloping).
This zigzagging stems from the technical definition of $N_\delta$.
}
\label{fig:ZoomPlot}
\end{figure}

The curves' shapes and locations illustrate the $N_\delta$ bound's advantages.
The bound (the blue, gently sloping curve) 
hugs the actual number $N_{\rm true}$ of trials required
(the red, staggered curve) more closely than $N_{\rm est}$ 
(the green, flat curve) does.
$N_{\rm est}$ remains flat, 
whereas the $N_\delta$ bound rises as $N_{\rm true}$ rises.
The $N_\delta$ bound often lower-bounds $N_{\rm true}$, as expected.
When $w^\delta$ is small, the $N_\delta$ bound weaves above and below $N_{\rm true}$, as shown in Fig.~\ref{fig:ZoomPlot}.
The reason was explained above Theorem~\ref{theorem:NBound}:
$N_\delta$ denotes the number of trials \emph{expected},
in a sense defined by probability and frequency,
to be required.
One might get lucky and extract $W \geq w^\delta$
before performing $N_\delta$ trials.
The dropping of the $N_{\rm true}$ curve below the $N_\delta$ bound
represents such luck.
But one expects to perform $N_\delta$ trials, 
and the $N_\delta$ bound lower-bounds $N_{\rm true}$ for most $w^\delta$-values.

%
%
%
%
\textbf{Conclusions---}We have sharpened predictions
about the number of experimental trials required to estimate $\Delta F$
from fluctuation relations.
We improved the approximation in~\cite{RareEvents} to an inequality,
tightened the bound (in scenarios of interest) 
with an entropy $H^\beta_\infty$,
freed the experimenter to choose a tolerance $w^\delta$ for dominance,
and approximately bounded the error in an estimate of $\Delta F$.
How to choose $w^\delta$ merits further investigation.
We wish to be able to specify the greatest error $\epsilon$
acceptable in an estimate of $\Delta F$.
From $\epsilon$, we wish to infer the number $N^\epsilon$ of trials
we should expect to perform.
This entire investigation improves the rigor with which free-energy differences $\Delta F$
can be estimated from experimental and numerical-simulation data.

%
%
%
%
\textbf{Acknowledgements---}NYH thanks Yi-Kai Liu 
for conversations about error probability
and thanks Alexey Gorshkov for hospitality at QuICS.
Part of this research was conducted while NYH was visiting 
the QuICS and the UMD Department of Chemistry and Biochemistry.
NYH was supported by an IQIM Fellowship and NSF grant PHY-0803371. The Institute for Quantum
Information and Matter (IQIM) is an NSF Physics Frontiers Center supported by the Gordon
and Betty Moore Foundation.
CJ was supported by NSF grant DMR-1506969.

%
%
%

\bibliographystyle{h-physrev}
\bibliography{Number_trials_bib}


\end{document}